\renewcommand{\int}[1]{[#1]}
\newcommand{\intz}[1]{\llbracket#1\rrbracket}
\newcommand{\HH}[1]{\mathtt{H}\left({#1}\right)}
\newcommand{\N}{\mathbb{N}}
\newcommand{\Z}{\mathbb{Z}}
\newcommand{\NC}{{\mathsf{NC}}}
\newcommand{\Poly}{{\mathsf{P}}}
\newcommand{\fsubset}{{\subset_{\texttt{fin}}}}
\newcommand{\neigh}{{\mathcal N}}
\newcommand{\moore}{{\mathcal N_{\texttt{m}}}}
\newcommand{\vn}{{\mathcal N_{\texttt{vn}}}}
\newcommand{\grains}[1]{{\mathcal G({#1})}}
\newcommand{\one}[1]{{\mathsf 1}}
\newcommand{\decisionpb}[4]{\medskip\noindent\fbox{\parbox{\textwidth}{
{#1} ({\bf #2})\\
{Input:} #3\\
{Question:} #4
}}\medskip}
\newcommand{\mooregrid}[2][0]{
\bitsetSetBin{moorestr}{#2}
  \begin{tikzpicture}[scale=.4]
    \ifthenelse{\equal{\bitsetGet{moorestr}{3}}{1}}{\draw[fill=black!75] (-1,1) rectangle ++(1,1);}{\draw[fill=white] (-1,1) rectangle ++(1,1);}
    \ifthenelse{\equal{\bitsetGet{moorestr}{7}}{1}}{\draw[fill=black!75] (0,1) rectangle ++(1,1);}{\draw[fill=white] (0,1) rectangle ++(1,1);}
    \ifthenelse{\equal{\bitsetGet{moorestr}{2}}{1}}{\draw[fill=black!75] (1,1) rectangle ++(1,1);}{\draw[fill=white] (1,1) rectangle ++(1,1);}
    \ifthenelse{\equal{\bitsetGet{moorestr}{6}}{1}}{\draw[fill=black!75] (1,0) rectangle ++(1,1);}{\draw[fill=white] (1,0) rectangle ++(1,1);}
    \ifthenelse{\equal{\bitsetGet{moorestr}{1}}{1}}{\draw[fill=black!75] (1,-1) rectangle ++(1,1);}{\draw[fill=white] (1,-1) rectangle ++(1,1);}
    \ifthenelse{\equal{\bitsetGet{moorestr}{5}}{1}}{\draw[fill=black!75] (0,-1) rectangle ++(1,1);}{\draw[fill=white] (0,-1) rectangle ++(1,1);}
    \ifthenelse{\equal{\bitsetGet{moorestr}{0}}{1}}{\draw[fill=black!75] (-1,-1) rectangle ++(1,1);}{\draw[fill=white] (-1,-1) rectangle ++(1,1);}
    \ifthenelse{\equal{\bitsetGet{moorestr}{4}}{1}}{\draw[fill=black!75] (-1,0) rectangle ++(1,1);}{\draw[fill=white] (-1,0) rectangle ++(1,1);}
    \node at (.5,.5) {$\times$};
    \ifthenelse{\equal{#1} {1}} {\node at (.5,-1.5) {#2 = \bitsetGetDec{moorestr} };} {}
  \end{tikzpicture}
}
\newcommand{\mooregridcenternumber}[2][0]{
\bitsetSetBin{moorestr}{#2}
  \begin{tikzpicture}[scale=.3]
    \ifthenelse{\equal{\bitsetGet{moorestr}{3}}{1}}{\draw[fill=black!75] (-1,1) rectangle ++(1,1);}{\draw[fill=white] (-1,1) rectangle ++(1,1);}
    \ifthenelse{\equal{\bitsetGet{moorestr}{7}}{1}}{\draw[fill=black!75] (0,1) rectangle ++(1,1);}{\draw[fill=white] (0,1) rectangle ++(1,1);}
    \ifthenelse{\equal{\bitsetGet{moorestr}{2}}{1}}{\draw[fill=black!75] (1,1) rectangle ++(1,1);}{\draw[fill=white] (1,1) rectangle ++(1,1);}
    \ifthenelse{\equal{\bitsetGet{moorestr}{6}}{1}}{\draw[fill=black!75] (1,0) rectangle ++(1,1);}{\draw[fill=white] (1,0) rectangle ++(1,1);}
    \ifthenelse{\equal{\bitsetGet{moorestr}{1}}{1}}{\draw[fill=black!75] (1,-1) rectangle ++(1,1);}{\draw[fill=white] (1,-1) rectangle ++(1,1);}
    \ifthenelse{\equal{\bitsetGet{moorestr}{5}}{1}}{\draw[fill=black!75] (0,-1) rectangle ++(1,1);}{\draw[fill=white] (0,-1) rectangle ++(1,1);}
    \ifthenelse{\equal{\bitsetGet{moorestr}{0}}{1}}{\draw[fill=black!75] (-1,-1) rectangle ++(1,1);}{\draw[fill=white] (-1,-1) rectangle ++(1,1);}
    \ifthenelse{\equal{\bitsetGet{moorestr}{4}}{1}}{\draw[fill=black!75] (-1,0) rectangle ++(1,1);}{\draw[fill=white] (-1,0) rectangle ++(1,1);}
    \node at (.5,.5) {\tiny\bitsetGetDec{moorestr}};
    \ifthenelse{\equal{#1}{1}}{
      \draw[black!75, very thick] (-1.1, -1.1) rectangle ++(3.2, 3.2);
      \draw[black!75, very thick, fill = black!75] (-1.1,-2) rectangle ++(3.2, .9);
    }{}
    \node[text={\ifnum#1=1 white\else black\fi}] at (.5,-1.5) {\scriptsize #2};
  \end{tikzpicture}
}
\begin{document}
\title{Timed Prediction Problem for Sandpile Models}
%
%
\author{Pablo Concha-Vega\inst{1} \and
Kévin Perrot\inst{1,2}}
\authorrunning{P. Concha-Vega et K. Perrot}
%
\institute{Aix Marseille Univ, CNRS, LIS, Marseille, France
\email{\{pablo.concha-vega,kevin.perrot\}@lis-lab.fr}
\and Université publique, Marseille, France
}
\maketitle              
\begin{abstract}
We investigate the computational complexity of the \emph{timed prediction problem}
in two-dimensional sandpile models. This question refines the classical
\emph{prediction problem}, which asks whether a cell $q$ will eventually become
unstable after adding a grain at cell $p$ from a given configuration.
The prediction problem has been shown
to be $\Poly$-complete in several settings,
including for subsets of the Moore neighborhood,
but its complexity for the von Neumann neighborhood remains open.
In a previous work, we provided a complete characterization of crossover gates
(a key to the implementation of non-planar monotone circuits)
for these small neighborhoods, leading to $\Poly$-completeness proofs
with only 4 and 5 neighbors among the height adjancent cells.
In this paper, we introduce the \emph{timed} setting, where the
goal is to determine whether cell $q$ becomes unstable \emph{exactly} at time $t$.
We distinguish several cases: some neighborhoods support complete timed toolkits
(including timed crossover gates) and exhibit $\Poly$-completeness; others admit
timed crossovers but suffer from synchronization issues; planar neighborhoods
provably do not admit any timed crossover; and finally, for some remaining
neighborhoods, we conjecture that no timed crossover is possible.

\keywords{Sandpile models \and Discrete dynamical system \and $\Poly$-completeness}
\end{abstract}
\section{Introduction}

Sandpiles are the canonical example of self-organized criticality~\cite{btw87},
a phenomenon where complex, scale-invariant behavior emerges from simple
local rules without any need for fine-tuning.
Originally proposed in the context of statistical physics, sandpile models have
also attracted interest from the perspective of computational complexity.
A central question in this context is the \textit{prediction problem}, which asks the
following: starting from a stable configuration, and choosing two specific cells
$p$ and $q$, is it possible that adding a single grain at cell $p$ will eventually
cause cell $q$ to become unstable (\emph{i.e.}, ready to topple) at some point in time?
This question captures the difficulty of anticipating the long-term effects of a
simple perturbation.

Early foundational work by Moore and Nilsson~\cite{mn99} revealed
that the computational power of sandpile dynamics depends strongly on the
dimension of the grid. In dimension one, they showed that the prediction problem
admits a fast parallel algorithm (in $\NC^3$), while in dimension three or more
it is $\Poly$-complete, implying an inherent sequentiality in the dynamics. The
two-dimensional case, however, remains open to this day.
A significant obstacle to proving $\Poly$-completeness in dimension two was identified
by Gajardo and Goles~\cite{gg06}, who showed that neither of the standard
von Neumann nor Moore neighborhoods admit a crossover gate,
which is a crucial component of all $\Poly$-completeness proofs so far.

In a previous work~\cite{cgmp2025}, we systematically studied all subsets
of the Moore neighborhood on the two-dimensional grid. By grouping them into
equivalence classes under rotation and reflection, we identified which classes admit
a \textit{crossover}, and which do not. It follows that the former have a $\Poly$-complete
prediction problem, whereas the latter may be computationally easier to predict.
In the present work, we extend this line of research by introducing a refined decision
problem, which we call the \textit{timed prediction problem}. This problem is similar
in spirit to the original prediction problem, but with an important difference: a
timestep $t$ is now part of the input. More precisely, given a stable configuration,
two cells $p$ and $q$, and a time $t$, the question is whether adding a grain at cell
$p$ will cause cell $q$ to become unstable after \emph{exactly} $t$ time steps.

We conduct a systematic study of the timed prediction problem across all subsets
of the Moore neighborhood, following the equivalence classes approach of~\cite{cgmp2025}.
For each class, we seek to determine whether it admits a complete set
of \textit{timed} computational gadgets.
Our results reveal a rich landscape: a large number of neighborhoods (planar ones) do
not admit a timed crossover at all, which makes it unlikely for them to support a
$\Poly$-hardness reduction. On the other hand, we identify several new classes that do admit
a complete timed toolkit, for which we establish the $\Poly$-completeness of the
timed prediction problem. A further group of neighborhoods admits a timed crossover
gate, but encounters a difficulty when connecting gates consistently in time, a
phenomenon we refer to as the \textit{delay issue}. For the remaining classes,
we conjecture that no timed crossover is possible.

\section{Definitions and state of the art}
\label{s:def_soa}

\subsubsection{Notations.}
We write $\N_+ = \N \setminus \{0\}$ for the set of positive integers,
and use $\fsubset$ to refer to finite subsets.  
Given any $d \in \N_+$, we denote by $0^d$ the $d$-dimensional tuple
consisting entirely of zeros.  
The Heaviside step function $\HH{\cdot}$ is defined by $\HH{\alpha} = 1$ 
if $\alpha \geq 0$, and $\HH{\alpha} = 0$ otherwise.  
For any $n \in \N_+$, we write $\int{n} = \{1,\dots,n\}$ and $\intz{n} = \{0,\dots,n-1\}$.

\subsubsection{Sandpile models.}

In this work, we consider sandpile dynamics restricted to two-dimensional square lattices $\Z^2$.  
A \emph{sandpile model} is specified by a finite \emph{neighborhood} $\neigh \fsubset \Z^2$,
whose cardinality we denote by $\theta_\neigh = |\neigh|$.  
A \emph{configuration} is a function $c: \Z^2 \to \N$ assigning to each cell $p \in \Z^2$
a number $c(p)$ of grains.  
The associated dynamics $F_\neigh: \N^{\Z^2} \to \N^{\Z^2}$ is defined cellwise by:

\[
  \forall p\in\Z^d: F_\neigh(c)(p)=
  c(p)
  -\theta_\neigh\cdot\HH{c(p)-\theta_\neigh}
  +\sum_{p'\in\neigh}\HH{c(p-p')-\theta_\neigh}.
\]

In words, each cell reaching the threshold $\theta_\neigh$ \emph{topples} or \emph{fires},
sending one grain to each of its out-neighbors,
whose relative positions are given by $\neigh$.
This local rule is applied in parallel at every cell of the lattice.
The most well-known neighborhoods, due to their simplicity and symmetries, are
\emph{von Neumann} and \emph{Moore}, respectively defined as:
\[
  \vn=\{(x,y)\in\Z^2 \mid |x|+|y|=1\}
  \quad\text{and}\quad
  \moore=\{(x,y)\in\Z^2 \mid 1\leq x^2+y^2\leq 2\}.
\]
See an example in Figure~\ref{fig:uspm}.

\begin{figure}[t]
  \centering
  \includegraphics[scale = .38, valign=c]{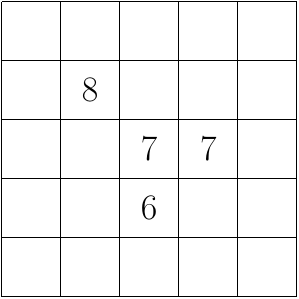}
  $\xrightarrow{F}$
  \includegraphics[scale = .38, valign=c]{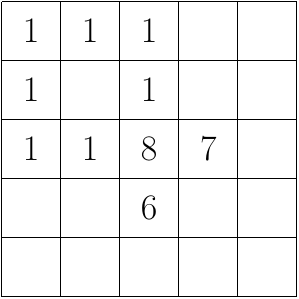}
  $\xrightarrow{F}$
  \includegraphics[scale = .38, valign=c]{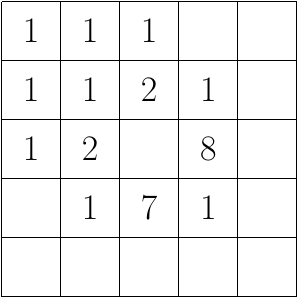}
  $\xrightarrow{F}$
  \includegraphics[scale = .38, valign=c]{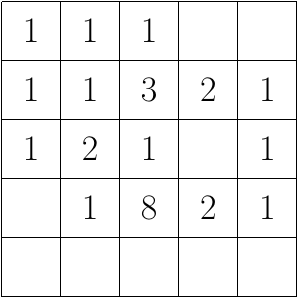}
  $\xrightarrow{F}$
  \includegraphics[scale = .38, valign=c]{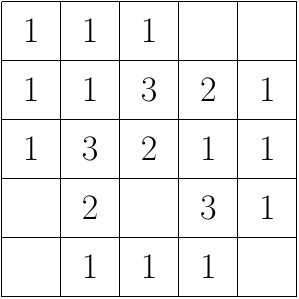}
  \caption{
    Example of sandpile dynamics for Moore neighborhood $\moore$. The numbers
    indicate the amount of grains in every cell (blank is empty).
  }
  \label{fig:uspm}
\end{figure}

When it is clear from the context, we drop the subscript notations
and simply denote $\theta$ and $F$ the threshold and the evolution rule.
The total number of sand grains in a configuration is denoted $\grains{c}=\sum_{p\in\Z^2} c(p)$,
it is invariant under $F$ for any sandpile model: $\grains{c}=\grains{F(c)}$.
A configuration $c$ is \emph{finite} when it contains a finite number of sand grains
($\grains{c}\in\N$), and \emph{stable} when no toppling occurs ($\forall p\in\Z^2:c(p)<\theta$).
From any finite configuration $c$, the dynamics converges to a stable configuration
denoted $c^\circ=\lim_{t\to\infty}F^t(c)$ (see \emph{e.g.},~\cite{fp19}).
Note that it is customary to exclude the cell itself (at relative position $0^d$) from its neighborhood,
in order to avoid ``trapping'' one sand grain at each cell.

We restrict our study to neighborhoods $\neigh\fsubset\Z^2$ that span $\Z^2$
\emph{i.e.}, such that $\neigh=\{p_1,p_2,\dots,p_k\}$ and 
  $\{ a_1\cdot p_1+a_2\cdot p_2\dots+a_k\cdot p_k \mid a_1,a_2,\dots,a_k\in\Z \} = \Z^2$.
Otherwise, the dynamics splits into independent subsystems, meaning that $\Z^2$
can be decomposed into sublattices for which the sand contents never interact.
In this case, the global behavior becomes a product of independent sandpile dynamics over
lower-dimensional components.
Such separability would complicate our considerations, as crossovers could trivially arise
between signals evolving in disconnected sublattices.

\subsubsection{Prediction problems and crossovers.}
The problems are formulated for each sandpile model $\neigh\fsubset\Z^2$.
The asymptotic prediction of sandpile dynamics is the problem of computing
the stable configuration $c^\circ$ reached from a given configuration $c$.
It is classically restricted to an input configuration $c$ which is finite and stable,
and a position $p\in\Z^2$ on which \emph{one} grain is added,
and formalized as the following decision problems~\cite{fp19}.
For any $p\in\Z^2$, let $\one_p$ be the configuration containing a unique grain at position $p$,
and let $+$ denote the cell-wise addition of two configurations
($\forall p\in\Z^2,(c+c')(p)=c(p)+c'(p)$).
It asks whether cell $q\in\Z^2$ topples during the evolution from $c+\one_p$ to $(c+\one_p)^\circ$.

\decisionpb
{$\neigh$-sandpile prediction problem}
{$\neigh$-PRED}
{a finite stable configuration $c\in\N^{\Z^2}$, and two positions $p,q\in\Z^d$.}
{does $\exists t\in\N,F^t(c+\one_p)(q)\geq\theta$ ?}

We highlight the following problem,
for which it is not clear yet whether it can be more difficult or less difficult than \textbf{$\neigh$-PRED}
for some sandpile models $\neigh$.
It asks whether cell $q$ topples at time step $t$ during the evolution from $c+\one_p$.

\decisionpb
{$\neigh$-sandpile timed prediction problem}
{$\neigh$-TIMED-PRED}
{a finite stable config.~$c\in\N^{\Z^2}$, two positions $p,q\in\Z^d$, and a time $t\in\N$.}
{does $F^t(c+\one_p)(q)\geq\theta$ ?}

The general goal is to dichotomize the complexity of prediction problems
based on the neighborhood $\neigh$:
either in $\NC$ or $\Poly$-complete for logspace reductions
(see~\cite{limits95,jaja92} for the definitions of $\NC$, $\Poly$,
and their meaning within the theory of computational complexity).
It is proven in~\cite{fp19} that all such problems lie in $\Poly$
(more precisely, the quantification over $t$ can be upper bounded to a value polynomial in the size of $c$).
On one hand, proofs that the problem belongs to $\NC$ typically involve either providing an efficient parallel algorithm,
or reducing the problem to another one for which such an algorithm is
known~\cite{fp19,gmmo17,gmt13,mn99}.
On the other hand, proofs of $\Poly$-completeness rely on the so-called \emph{Banks approach}~\cite{banks71},
which consists in reducing from \emph{monotone circuit value problem} (\textbf{MCVP})
by implementing the circuit computation via chains of reactions in the sandpile dynamics.
In this context, it is required to build \emph{wires}, \emph{turns},
\emph{and} gates, \emph{or} gates, \emph{diodes} and \emph{crossover} gates~\cite{gmpt17}.
In turns out that all gates are usually straightforward to design,
except crossover gates. The inexistence of crossover gates is therefore a strong indication
that such a reduction may not be possible, but it is relative to a precise definition of crossover gate
(to prove that none exist), which enforces a precise (hence restrictive)
consideration of what a \emph{wire}, or more generally a \emph{signal}, is.
This topic is of much broader interest~\cite{dm01}.

\begin{definition}[Crossover gate]\normalfont
A \emph{crossover gate} is a rectangular subset of the lattice
on which a stable configuration is set, such that it transports signals from two pairs of opposite sides,
independently from each other: two wires cross each other.
Given a sandpile model $\neigh$, a crossover gate of size $M\times N$ is formally defined as a configuration
$g:\intz{M}\times\intz{N}\to\{0,\dots,\theta-1\}$ together with four cells
$n,e,s,w\in(\intz{M}\times\intz{N})\setminus\{(0,0),(M-1,0),(0,N-1),(M-1,N-1)\}$ not in the corners,
verifying the following constraints~1--5.
For convenience, we denote the four sides of the gate as
$S_0=\{(0,y)\mid y\in\intz{N}\}$,
$S_1=\{(x,0)\mid x\in\intz{M}\}$,
$S_2=\{(M-1,y)\mid y\in\intz{N}\}$,
$S_3=\{(x,N-1)\mid x\in\intz{M}\}$.
\begin{enumerate}
  \item $n,s$ are on opposite sides, \emph{i.e.}
    $\exists i_n,n\in S_{i_n}$, $\exists i_s,s\in S_{i_s}$ and $|i_n-i_s|=2$;
  \item $w,e$ are on opposite sides, \emph{i.e.}
    $\exists i_w,w\in S_{i_w}$, $\exists i_e,e\in S_{i_e}$ and $|i_w-i_e|=2$;
  \item $n,e,s,w$ are on different sides, \emph{i.e.}
    $\{i_n,i_e,i_s,i_w\}=\{0,1,2,3\}$;
  \item adding a grain at $n$ eventually topples $s$, but no cell on the other sides, \emph{i.e.}
    $\exists t\in\N,F^t(g+1_n)(s)\geq\theta$,
    but $\forall p\in S_{i_w}\cup S_{i_e},\forall t\in\N,F^t(g+1_n)(p)<\theta$;
  \item adding a grain at $w$ eventually topples $e$, but no cell on the other sides, \emph{i.e.}
    $\exists t\in\N,F^t(g+1_w)(e)\geq\theta$,
    but $\forall p\in S_{i_n}\cup S_{i_s},\forall t\in\N,F^t(g+1_w)(p)<\theta$.
\end{enumerate}
\end{definition}

%
It is necessary to provide this formal definition, as we will state some impossibility results;
however it is difficult to use in practice.
In the following, no precise reference will be made to its constituents.
The main tool we will employ is the \emph{pair of firing graphs},
which is introduced at the beginning of Section~\ref{s:timed_crossover}.

For the timed prediction problem, a \emph{timed crossover gate} is required.
It is the same as a crossover gate,
except that all quantifications over time $t$ are bounded to a given value $T\in\N^+$
called the \emph{delay} of the gate
(which is part of the definition of the gate). That is,
when adding a grain to one side (at $n$ or $w$), the cell on the opposite side ($s$ or $e$)
must topple at time step $T$, and no cell on the other sides must topple before time step $T+1$.

\section{Timed crossover among subsets of Moore}
\label{s:timed_crossover}

In this section we present an exhaustive study of the computational complexity of sandpile dynamics,
by considering timed prediction problems (\textbf{$\neigh$-TIMED-PRED})
and timed crossover gates for every subset of the Moore neighborhood $\moore$.
Observe that some of the $2^8$ subsets of $\moore$ (which has $8$ elements) do not span $\Z^2$,
and the existence of a crossover gate is invariant by rotation and axial symmetries of the neighborhood.
It reduces the number of neighborhoods to study to $43$ equivalence classes,
presented in~\cite[Table 1]{cgmp2025}.

An example of timed crossover gate is presented on Figure~\ref{fig:timedcrossovergate}.
We will show that while some neighborhoods do not support a crossover gate, they do admit a
timed crossover gate and exhibit a $\Poly$-complete timed prediction problem.
It hints at the fact that predicting the dyanmics at a precise time step
may be harder than asymptotic prediction of the dynamics (the final stable configuration).

\begin{definition}[Subset of Moore neighborhood]
  Let \texttt{n},\texttt{ne},\texttt{e},\texttt{se},\texttt{s},\texttt{sw},\texttt{w},\texttt{nw}
  denote the eight cardinal coordinates, starting with \emph{north} and proceeding clockwise.
  We denote a subset of Moore neighborhood as an 8-bits string,
  where the $i$-th bit indicates whether the $i$-th coordinate in the list 
  \texttt{n},\texttt{e},\texttt{s},\texttt{w},\texttt{nw},\texttt{ne},\texttt{se},\texttt{sw},
  is part of the neighborhood (bit \texttt{1}), or not (bit \texttt{0}).
  We also interpret 8-bits strings as numbers (converting them to decimal)
  with the least significant bit on the right,
  see Figure~\ref{fig:moorestr}.
\end{definition}

\begin{figure}[t]
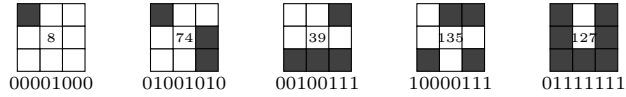

  \centerline{
    \mooregridcenternumber{00001000}
    \mooregridcenternumber{01001010}
    \mooregridcenternumber{00100111}
    \mooregridcenternumber{10000111}
    \mooregridcenternumber{01111111}
  }
  \caption{Subsets of Moore neighborhood corresponding to
  $\texttt{00001000}=8$,
  $\texttt{01001010}=74$,
  $\texttt{00100111}=39$,
  $\texttt{10000111}=135$,
  $\texttt{01111111}=127$.
  }
  \label{fig:moorestr}
\end{figure}

In this numbering scheme, the von Neumann neighborhood ($\vn$) corresponds to $\texttt{11110000}$
(decimal $240$),
while the Moore neighborhood ($\moore$) corresponds to $\texttt{11111111}$ (decimal $255$).
Note that neighborhoods with numbers $\leq 15$ consist exclusively of diagonal
neighbors, and as a result, they do not span $\Z^2$.


\begin{figure}[t]
  \centering
  \includegraphics[scale=.36]{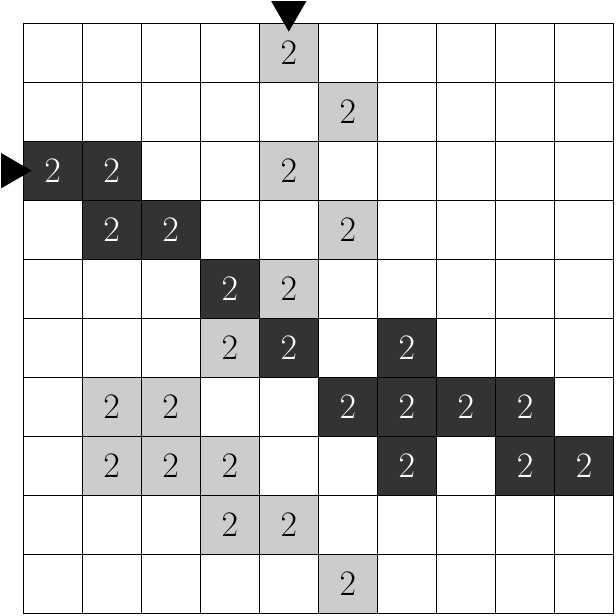}\quad
  \includegraphics[scale=.36]{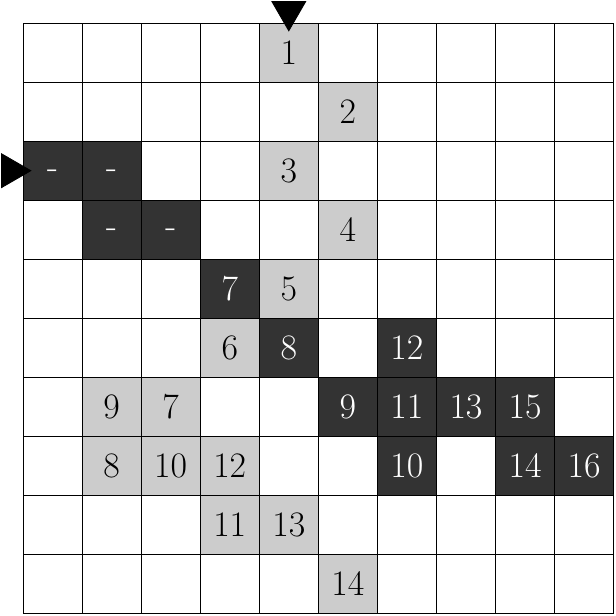}\quad
  \includegraphics[scale=.36]{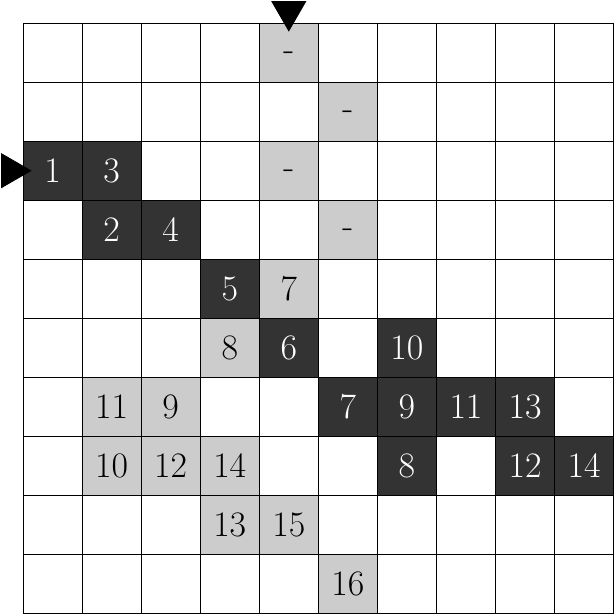}
  \caption{
    Example of timed crossover gate of size $10 \times 10$ for neighborhood $131$ (left).
    The delay of the gate is $14$.
    We picture the timestamps of each cell,
    following a grain addition at $n$ (center),
    and a grain addition at $w$ (right).
  }
  \label{fig:timedcrossovergate}
\end{figure}

\subsubsection{Section outline.}
We start this section by defining timed firing graphs (Subsection~\ref{ss:firingraphs-T}),
a tool that captures the timing and causal dependencies of topplings in a timed crossover.
First we prove that timed crossover gates are impossible when the neighborhood is planar (Subsection~\ref{ss:planar-T}).
Then we exhibit timed crossover gates for many neighborhoods.
For some of them we prove that \textbf{$\neigh$-TIMED-PRED} is $\Poly$-complete (Subsection~\ref{ss:crossover-T}),
and for others we are faced with a delay issue when trying to connect the different gates (Subsection~\ref{ss:crossover-T-delay}).
The results of this section are summarized in Table~\ref{tab:res_cross-T}.

\begin{table}[t]
  \centering
  \scalebox{.8}{
  	\begin{tabular}{|c|c|c|c|}
    \hline
    Section & \ref{ss:crossover-T} & \ref{ss:crossover-T-delay} & \ref{ss:crossover-T-impossibility}\\
    \hline
    Type & $\Poly$-complete & Delay issue & \makecell{Timed crossover impossibility\\(conjecture)}\\
    \hline
    \rotatebox{90}{Neigborhoods} & \makecell{\\ \mooregridcenternumber{01101111} \mooregridcenternumber{01111111} \\ \mooregridcenternumber{10000111} \mooregridcenternumber{10001111} \\ \mooregridcenternumber{10010111} \mooregridcenternumber{11000011} \\
    \mooregridcenternumber{11000111} \mooregridcenternumber{11010011} \\
    \mooregridcenternumber{11010111}\\} & \makecell{\mooregridcenternumber{00100111} \mooregridcenternumber{01000011} \\ \mooregridcenternumber{01010011} \mooregridcenternumber{01010111} \\ \mooregridcenternumber{01011111} \mooregridcenternumber{10000011} \\}& \makecell{ \\ \mooregridcenternumber{00100011} \mooregridcenternumber{01100011} \\ \mooregridcenternumber{01100111} \mooregridcenternumber{01110011} \\ \mooregridcenternumber{01110111} \mooregridcenternumber{10100011} \\ \mooregridcenternumber{11100011} \mooregridcenternumber{11110011} \\ \mooregridcenternumber{11110111} \mooregridcenternumber{11111111} \\}\\
    \hline
\end{tabular}
  }
  \caption{Summary of results exposed in Section~\ref{s:timed_crossover}.}
  \label{tab:res_cross-T}
\end{table}

\subsection{Timed firing graphs}
\label{ss:firingraphs-T}

The causal structure of topplings is captured by the concept of firing graph, introduced in~\cite{gg06}.
It is at the basis of our reasonings in this work.
Given a neighborhood $\neigh$, we say that cell $v_1$ is an \emph{in-neighbor}
(resp.~\emph{out-neighbor}) of cell $v_2$
when $(v_2-v_1)\in\neigh$ (resp.~$(v_1-v_2)\in\neigh$).
Observe that $v_1$ may be both an in-neighbor and an out-neighbor of $v_2$.
In particular, this is always the case for symmetric neighborhoods,
such as von Neumann $\vn$ and Moore $\moore$.

\begin{definition}[Firing graphs]\label{def:fg}\normalfont
  Given a crossover gate $g$ on cells $n,s,w,e$ 
  we define its two \emph{firing graphs} $G_{ns}^g = (V_{ns}, A_{ns})$ and $G_{we}^g = (V_{we}, A_{we})$ as:
  \begin{itemize}
    \item $V_{ns}$ (resp.~$V_{we}$) is the set cells toppled after adding a grain at cell $n$ (resp.~$w$);
    \item we have $(v_1,v_2) \in A_{ns}$ (resp.~$A_{we}$) 
      when $v_1,v_2 \in V_{ns}$ (resp.~$V_{we}$) verify:
      \begin{itemize}
        \item $v_1$ is an in-neighbor of $v_2$ for $\neigh$, and
        \item $v_1$ is fired strictly before $v_2$ after adding a grain at cell $n$ (resp.~$w$).
    \end{itemize}
  \end{itemize}
\end{definition}

The endpoints of a firing graph are called \emph{starting cells} ($n$ and $w$)
and \emph{ending cells} ($s$ and $e$) respectively.
For every arc $(v_1, v_2)$ in a firing graph, we call $v_1$ a \emph{predecessor} of $v_2$,
and $v_2$ a \emph{successor} of $v_1$.
The transitive and reflexive closure of the predecessor (resp.~successor)
relationship is the \emph{ancestor} (resp.~\emph{descendant}) relationship.
By definition, a predecessor is necessarily an in-neighbor, and a successor is necessarily an out-neighbor.
The firing graph captures the \emph{effective} causal relationships among cells
within a given crossover gate, whereas the in-neighbor and out-neighbor relationships represent \emph{potential} relationships.

The pair of firing graphs of a timed crossover gate may intersect (have vertices in common),
because the wire in the perpendicular direction can be toppled
(provided that it does so some steps behind compared to the $1$ signal traversing the gate).
We introduce the notion of \emph{timed firing graphs}, which are subsets of the firing graphs.
Given that the arcs of the firing graphs represent the causal relationship among toppled cells,
it would be enough to discard the part not necessary to topple an ending cell,
by restricting the corresponding firing graph to its ancestors.
However, this inductive definition from the ending cell backwards,
is not convenient for proofs by induction on the number of steps.
We prefer the following definition, based on timestamps,
which we denote $t_{ns}:V_{ns}\to\N$ and $t_{we}:V_{we}\to\N$
to distinguish the two firing graphs.
In each firing graph, we take into account only cells toppling sufficiently early to belong to $1$ signals.
Recall that the timestamps of the two ending cells are equal to the delay $T$ of the gate.
Let $t_{ns}(v) = +\infty$ when $t_{ns}$ is not defined for $v$.
The same applies to $t_{we}$.

\begin{definition}[Timed firing graphs]
  Given a timed crossover gate $g$ of delay $T\in\N_+$ on cells $n,s,w,e$, 
  its two \emph{timed firing graphs} are the subsets of its firing graphs,
  restricted to vertex sets:
  \[
    V^{T}_{ns}=\{v\in V_{ns}\mid t_{ns}(v) \leq t_{we}(v)\}
    \text{ and }
    V^{T}_{we}=\{v\in V_{we}\mid t_{we}(v) \leq t_{ne}(v)\}.
  \]
\end{definition}

It is necessary that in a timed crossover gate, each of the two ending cells belongs only
to its corresponding timed firing graph.
The $\leq$ comparison among timestamps aims at discarding cells that have some retard.
The following Lemma generalizes Proposition~1 from~\cite{np18},
to the context of timed firing graphs.

\begin{lemma}\label{lemma:distinct-T}
  If a sandpile model $\neigh$ admits a timed crossover gate,
  then it also admits a timed crossover gate $g$ with timed firing graphs
  $G^T_{ns} = (V^T_{ns}, A^T_{ns})$ and $G^T_{we} = (V^T_{we}, A^T_{we})$,
  such that $V^T_{ns} \cap V^T_{we} = \emptyset$.
\end{lemma}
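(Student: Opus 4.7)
The plan is to proceed by induction on $|V^T_{ns}\cap V^T_{we}|$: starting from any timed crossover gate $g$, I will show that as long as this intersection is non-empty, one can construct another timed crossover gate $g'$ whose intersection is strictly smaller. By the definition of timed firing graphs, a cell lies in the intersection if and only if it topples in both scenarios at exactly the same step $t^{*}=t_{ns}(v)=t_{we}(v)$, so the task reduces to breaking all such synchronizations while preserving the crossover property and the common delay.

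First I would pick a synchronous cell $v$ with \emph{minimal} timestamp $t^*$. Minimality guarantees that every in-neighbor of $v$ toppling strictly before $t^*$ in either scenario is already non-synchronous, hence belongs to at most one of the two firing graphs. Let $k_{ns}(v)$ and $k_{we}(v)$ be the numbers of in-neighbors of $v$ that topple by step $t^{*}-1$ in each scenario; by construction both are at least $\theta-g(v)$.

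Two cases then arise. Case A: one of these counts, say $k_{ns}(v)$, equals exactly $\theta-g(v)$. Then decrementing $g(v)$ by one makes $v$ fail to reach threshold at step $t^*$ in the $ns$ scenario, so the new timestamp $t'_{ns}(v)$ is strictly larger than $t^*$ while $t'_{we}(v)$ is unaffected up to that step; this removes $v$ from $V^T_{ns}$ and hence from the intersection. Case B: both counts strictly exceed $\theta-g(v)$. Then, by minimality of $t^*$ and because $V_{ns}\neq V_{we}$ in any non-degenerate gate, there exists an in-neighbor $u$ of $v$ that topples before $t^*$ in only one of the two scenarios; perturbing $g(u)$ by $\pm 1$ asymmetrically shifts $t_{ns}(v)$ relative to $t_{we}(v)$ and breaks the tie at $v$.

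The main obstacle will be to control the cascading effect of these local modifications: a change at $v$ or $u$ can propagate downstream and shift the timing of the ending cell $s$ or $e$, violating the exact delay $T$, or it may create new synchronous cells among the descendants of $v$. To address this I would rely on the minimality of $t^*$ to confine the effect to cells of timestamp $\ge t^*$ in the modified scenario, and, when such a shift reaches an output, compensate by appending a short straight wire segment to the corresponding output side so as to realign both ending cells at a new common delay $T'\ge T$. Iterating the construction with a lexicographic measure (first the intersection size, then the total sum of timestamps) should yield termination and produce the desired gate with disjoint timed firing graphs.
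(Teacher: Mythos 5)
Your strategy — iteratively perturbing single cells to break timestamp ties, with a lexicographic termination measure — diverges from the paper's proof, which performs a single global surgery: it removes \emph{all} grains from every cell of $V'^T_{ns}\cap V'^T_{we}$ at once (so these cells never topple in either scenario), and then compensates each cell lying in exactly one of the two timed firing graphs by adding as many grains as it had predecessors in the removed intersection. A straightforward induction on time steps then shows that the surviving cells keep exactly their original timestamps, so the delay $T$, the output cells, and the gate's geometry are all preserved verbatim, and no realignment or termination argument is needed.

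Your local-perturbation route has several concrete gaps. First, in Case A you claim that decrementing $g(v)$ leaves $t'_{we}(v)$ unaffected; this fails when $k_{we}(v)=\theta-g(v)$ as well, in which case the tie may survive (both timestamps shift), and more seriously the decremented $v$ may simply never reach threshold again in the $ns$ scenario, extinguishing the avalanche so that $s$ never topples at all — a broken gate, not a delayed one. Second, in Case B the existence of an in-neighbor $u$ of $v$ toppling before $t^*$ in only one scenario does not follow from $V_{ns}\neq V_{we}$: minimality of $t^*$ only rules out \emph{synchronous} earlier cells, so every relevant in-neighbor could topple before $t^*$ in both scenarios (at different times), and any perturbation of $g(u)$ then acts on both scenarios and on all out-neighbors of $u$, not just on $v$. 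Third, and most damaging, your termination measure does not decrease: you concede that the modification "may create new synchronous cells among the descendants of $v$," which can \emph{increase} $|V^T_{ns}\cap V^T_{we}|$, so the lexicographic induction has no base for terminating; and the proposed fix of appending wire segments to re-equalize the two output delays changes the rectangle, the positions of $n,e,s,w$, and potentially introduces new interactions, none of which is controlled. The paper's removal-plus-compensation construction sidesteps all of these issues because the compensation is calibrated so that every non-intersection timestamp is provably unchanged.
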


\begin{proof}
  The proof is constructive:
  given a timed crossover gate $g'$ with timed firing graphs
  $G'^T_{ns} = (V'^T_{ns}, A'^T_{ns})$ and $G'^T_{we} = (V'^T_{we}, A'^T_{we})$,
  we:
  \begin{enumerate}
    \item remove all grains from vertices in the intersection $V'^T_{ns}\cap V'^T_{we}$
      (which are intuitively useless), so that they do not topple anymore;
    \item compensate for the missing predecessors in each timed firing graph:
      for each vertex in exactly one of $V'^T_{ns}$ or $V'^T_{we}$,
      add as many grains to it as it had predecessors in $V'^T_{ns}\cap V'^T_{we}$.
  \end{enumerate}
  We now argue that it gives another timed crossover gate with distinct timed firing graphs.
  By induction on successive time steps, we have the following three simple facts.
  First, the vertices in $V'^T_{ns}\cap V'^T_{we}$ do not topple anymore
  (this uses the fact that sandpile graphs are Eulerian).
  Second, the vertices not in the intersection are still in their timed firing graph
  (thanks to the sand grains added at step 2, there is no retard in the topplings).
  Third, no new vertex appears in each timed firing graph,
  \emph{i.e.} $V^T_{ns} \subseteq V'^T_{ns}$ and $V^T_{we} \subseteq V'^T_{we}$
  (because the sand grains added at step 2 do not create successor relationships in the timed firing graph).
  Given that $n,s\in V'^T_{ns}\setminus V'^T_{we}$
  and $w,e\in V'^T_{we}\cap V'^T_{ns}$,
  this still holds and the constructed configuration
  is indeed a timed crossover gate
  (it still performs a timed crossover of two signals),
  with sets of vertices $V^T_{ns}=V'^T_{ns}\setminus V'^T_{we}$
  and $V^T_{we}=V'^T_{we}\setminus V'^T_{ns}$.
  \qed
\end{proof}

\subsection{Timed crossover impossibility: planar neighborhoods}
\label{ss:planar-T}

As a corollary of Lemma~\ref{lemma:distinct-T},
it is impossible to have a timed crossover gate
when the neighborhood is planar.

\begin{corollary}
  If neighborhood $\neigh$ has a planar sandpile graph,
  then it does not admit a timed crossover gate.
\end{corollary}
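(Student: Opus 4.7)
The approach is to combine Lemma~\ref{lemma:distinct-T} with the classical Jordan-curve obstruction for vertex-disjoint paths in planar graphs. Assume for contradiction that a planar sandpile model $\neigh$ admits a timed crossover gate; by Lemma~\ref{lemma:distinct-T}, take one whose timed firing graphs $G^T_{ns}$ and $G^T_{we}$ have disjoint vertex sets, and extract a directed causal path from each. Since $s$ topples at delay $T$, iterating the ``in-neighbor that fires strictly earlier'' relation backwards must terminate at the only cell toppling at time $0$, namely $n$; this yields a path $P_{ns}\subseteq G^T_{ns}$ from $n$ to $s$, and symmetrically a path $P_{we}\subseteq G^T_{we}$ from $w$ to $e$. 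By Lemma~\ref{lemma:distinct-T}, these two paths share no vertex.

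The second step invokes the following planar fact: if four vertices $a,b,c,d$ appear in this cyclic order on a common face of a planar graph, then vertex-disjoint paths from $a$ to $c$ and from $b$ to $d$ cannot coexist, since the first path together with a boundary arc bounds a closed Jordan curve separating $b$ from $d$. Conditions~(1)--(3) of the crossover gate definition force $n,e,s,w$ to occupy the four distinct sides of the rectangle, hence to appear in the alternating cyclic order $n,e,s,w$ around its boundary. Conditions~(4)--(5) further guarantee that topplings triggered at $n$ never reach the east or west side (and symmetrically for $w$), so both $P_{ns}$ and $P_{we}$ stay inside the rectangular region away from the two perpendicular sides. Taking a planar embedding of the induced subgraph whose outer face is the rectangle's exterior then places $n,e,s,w$ on a common face in the crossing cyclic order, and the obstruction yields the desired contradiction.

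The main obstacle is the planar-embedding bookkeeping at the boundary: one needs an embedding of the induced subgraph on the rectangle in which all four sides lie on a common (outer) face, inherited from a global planar drawing of $\neigh$'s sandpile graph. This is precisely where the planarity hypothesis is indispensable, since non-planar neighborhoods like $\moore$ fail to admit such a drawing and the Jordan-curve argument correspondingly breaks down, in accordance with the existence of genuine crossover gates in that setting.
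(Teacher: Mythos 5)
Your argument is essentially the paper's intended one: the corollary is stated there without further proof, relying exactly on Lemma~\ref{lemma:distinct-T} to produce vertex-disjoint timed firing graphs and on the standard Jordan-curve obstruction to two vertex-disjoint paths joining alternately placed boundary cells in a planar graph. Your reconstruction correctly fills in the path-extraction and embedding details that the paper leaves implicit.
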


This proves that neighborhoods
$66$, $74$, $82$, $98$, $90$, $106$, $130$, $146$,
$192$, $200$, $202$, $208$, $210$, $226$, $234$, $240$, $242$
and $250$ do not admit a timed crossover gate.

\subsection{Timed crossover possibility: $\Poly$-complete neighborhoods}
\label{ss:crossover-T}

Neighborhoods studied in this subsection:

\begin{center}
  \mooregridcenternumber{01101111} \hspace*{-.6cm}
  \mooregridcenternumber{01111111} \hspace*{-.6cm}
  \mooregridcenternumber{10000111} \hspace*{-.6cm}
  \mooregridcenternumber{10001111} \hspace*{-.6cm}
  \mooregridcenternumber{10010111} \hspace*{-.6cm}
  \mooregridcenternumber{11000011} \hspace*{-.6cm}
  \mooregridcenternumber{11000111} \hspace*{-.6cm}
  \mooregridcenternumber{11010011} \hspace*{-.6cm}
  \mooregridcenternumber{11010111}
\end{center}

We now prove that some subsets of Moore neighborhood have a $\Poly$-complete
\textbf{$\neigh$-TIMED-PRED} problem.
The proof is analogous to Theorem~2 from~\cite{cgmp2025},
but taking into account the delay of the gates.

\begin{theorem}
  \textbf{$\neigh$-TIMED-PRED} is $\Poly$-complete for neighborhoods
  $111$, $127$, $135$, $143$, $151$, $195$, $199$, $211$, $215$.
\end{theorem}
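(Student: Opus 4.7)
\medskip
\noindent\textbf{Proof plan.}
Membership in $\Poly$ is immediate: \textbf{$\neigh$-TIMED-PRED} only asks to test the value $F^t(c+\one_p)(q)$ at a single step $t$ given as input; since $F$ is computable in polynomial time and the bound on $t$ from~\cite{fp19} applies, the whole computation fits in $\Poly$. The substance of the theorem is therefore the $\Poly$-hardness half, which I would establish by a logspace reduction from \textbf{MCVP}, following the Banks approach already used in~\cite{cgmp2025} for the untimed setting, but carried out in a time-aware manner.

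The first step is to exhibit, for each of the nine listed neighborhoods, a complete \emph{timed toolkit}: a wire (of tunable length), a turn, an AND gate, an OR gate, a diode, and a timed crossover gate. For each gadget I would specify input cells, output cells, and a fixed delay $T$, so that firing each input at time $0$ causes the output to fire at time exactly $T$ (and no spurious firing occurs outside the intended causal paths, which will be checked via the firing graphs of Section~\ref{ss:firingraphs-T}). The timed crossover for each of these neighborhoods is already provided in this subsection (cf.\ Figure~\ref{fig:timedcrossovergate} for neighborhood $131$, and analogous constructions for the others); the remaining gadgets are local configurations that one can design by hand and verify by simulation, essentially as in~\cite{cgmp2025}.

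The second step is the \emph{synchronization lemma}: call $T_\gamma$ the delay of gadget $\gamma$. To assemble gates into a circuit we must guarantee that both inputs of each AND/OR arrive simultaneously; this is achieved by padding wires. Concretely, I would show that for each of the nine neighborhoods the wire gadget admits two lengths whose delays have $\gcd = 1$ (or, more weakly, a set of delays generating $\N$ above some threshold), so that any total delay above a small constant is realizable. Given an instance $(C,x)$ of \textbf{MCVP}, one then lays out $C$ on the lattice in logspace, replacing each gate by its gadget and each wire by a suitably padded wire gadget so that every signal entering a gate at depth $d$ in $C$ is scheduled to arrive at the same time $\tau_d$. The crossover gate is used wherever the planar embedding would require wires to cross; since it is timed, it also preserves the schedule. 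The overall configuration has size polynomial in $|C|$, and the output wire of $C$ reaches its target cell $q$ at a known time $t^\star$ iff the circuit evaluates to $1$, which is exactly the question asked by \textbf{$\neigh$-TIMED-PRED}.

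The main obstacle is the synchronization step: it is precisely the ability to equalize delays with the available wire gadgets that separates the present subsection from the ``delay issue'' subsection~\ref{ss:crossover-T-delay}. I would therefore spend most care on showing, neighborhood by neighborhood, that wires of two incommensurable delays exist; once this is in hand, the global padding argument is routine, and $\Poly$-completeness follows from the standard $\Poly$-completeness of \textbf{MCVP}. \qed
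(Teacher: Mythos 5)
Your membership argument and the overall Banks-style reduction from \textbf{MCVP} match the paper's strategy, but your synchronization mechanism is genuinely different from the one the paper uses, and it is also where your plan has a gap. The paper does not pad wires at all: for each neighborhood it designs a toolkit in which \emph{every} gadget (wire, turn, \emph{and}, \emph{or}, diode, crossover) has the \emph{same} fixed size and the \emph{same} delay $d$, and then embeds the fan-in/fan-out-two circuit (\textbf{AM2CVP}) on a regular grid so that the two inputs of any gate sit at the same grid distance from the origin; synchronization is then automatic, with every $1$-signal reaching gate borders exactly at times $z+d\cdot T$ and every $0$-signal strictly late. Your alternative --- realize all sufficiently large delays by combining wires with coprime delays, then pad each connection to a common arrival time $\tau_d$ --- could in principle work, but you give no reason why such coprime-delay wires should exist, and this is not a routine point. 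Subsection~\ref{ss:crossover-T-delay} is devoted precisely to neighborhoods where the geometry forces every toppling step to advance a signal along a fixed axis, so that the delay between two fixed cells is determined modulo $2$ by their displacement and no amount of padding or rerouting can change its parity. Your plan therefore hinges on a neighborhood-by-neighborhood verification that you have not carried out, and which is exactly the kind of obstruction the uniform-delay design sidesteps.

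Two further points. First, the paper dispatches neighborhoods $135$ and $143$ immediately, as their untimed prediction problem is already $\Poly$-complete by~\cite{cgmp2025}; only the remaining seven need the new construction. Second, the timed crossover you cite as ``already provided'' (Figure~\ref{fig:timedcrossovergate}) is for neighborhood $131$, which is a \emph{delay-issue} neighborhood and not one of the nine in the statement; the gadgets actually needed are those of Figures~\ref{fig:timed_gates_111_127}, \ref{fig:timed_gates_151} and~\ref{fig:timed_gates_195_211_215_199}, and exhibiting them with a common delay is the bulk of the proof, which your plan defers entirely to ``design by hand and verify by simulation.''
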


\begin{proof}
  For neighborhoods $135$ and $143$, this is a direct consequence
  of~\cite[Theorem 2]{cgmp2025}, as they both have a $\Poly$-complete
  prediction problem.

\begin{figure}[t]
  \centerline{\includegraphics[scale=.8]{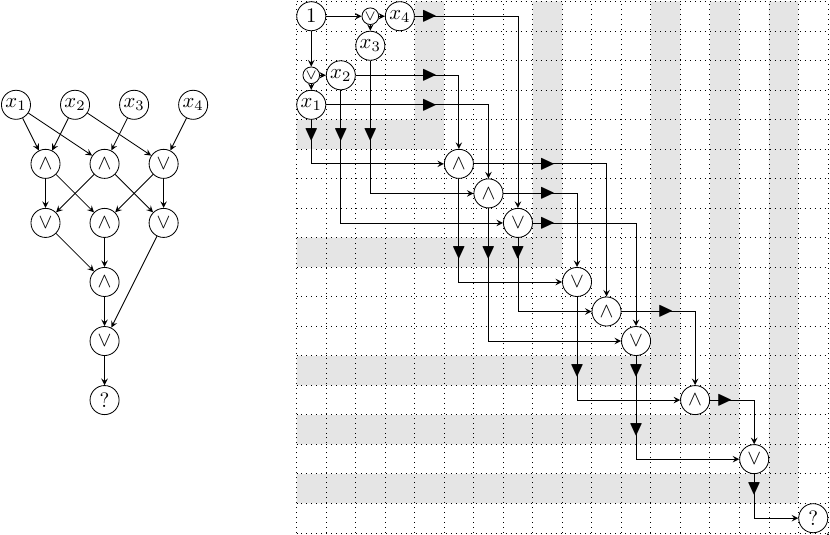}}
  \caption{
    Embedding of a \textbf{MCVP} instance of fan-in and fan-out two (left) on the grid (right).
    Diodes (represented as $\blacktriangleright$ and $\blacktriangledown$) are used systematically between layers
    in order to enforce the flow of information (and prevent it to go backward).
  }
  \label{fig:cvp-grid}
\end{figure}

  For the other neighborhoods, we prove $\Poly$-completeness by reduction
  from monotone circuit value problem (\textbf{MCVP}) with gates of fan-in
  and fan-out two (problem \textbf{AM2CVP} in~\cite{limits95}),
  We consider an embedding of the monotone circuit
  on the two-dimensional grid (see Figure~\ref{fig:cvp-grid}),
  and replace all the elements by the corresponding sandpile gadgets of fixed size.
  For each neighborhood, we present a set of gates having a uniform delay
  (it takes the same number of steps for a signal to traverse each gate).
  Let us denote $d\in\N_+$ this \emph{delay},
  and $z$ the peculiar delay of the starting constant $1$ gate
  (time necessary for the single grain addition at $p$ to reach the borders of the constant $1$ gate).
  Following Banks approach, a signal $1$ is an avalanche (chain of topplings)
  reaching the edges of gates at time steps $(z+d\cdot T)_{T\in\N}$,
  and a signal $0$ is a steady state or an avalanche
  reaching the edges of gates on time steps $z+d\cdot T+i$ with $i>0$
  (retards may accumulate).
  One can check on the layout of the \textbf{MCVP} instance on the grid (Figure~\ref{fig:cvp-grid}),
  that following a single grain addition, all the variable signals start synchronized
  (with avalanches for $1$ signals, and steady wires for $0$ signals).
  The uniform delay on all gates ensures that signals
  keep synchronized, because gates are placed on the diagonal:
  the two input signals are at the same distance from the origin.
  Every wire follows a shortest path in the grid toward the questioned cell.
  The latter therefore topples at time step $z+d\cdot U+y$,
  with $U$ the distance in the grid from the origin (top left cell) to the questioned cell
  and $y$ for its inner mechanism (precise distance to $q$),
  if and ony if the \textbf{MCVP} outputs a signal $1$.

\begin{figure}
  \centering
  \includegraphics[scale=.35]{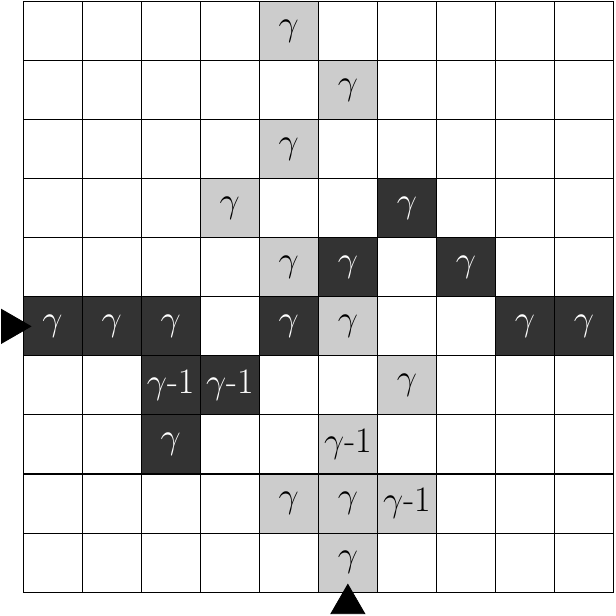}
  \hspace*{.5cm}
  \includegraphics[scale=.35]{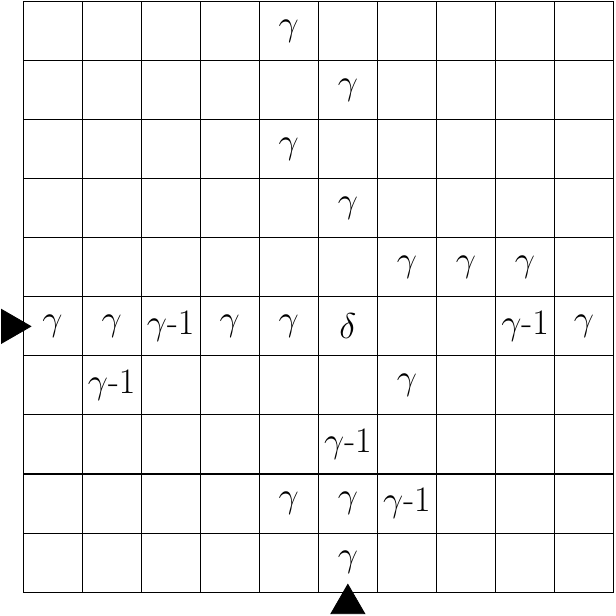}
  \\[.5em]
  \includegraphics[scale=.35]{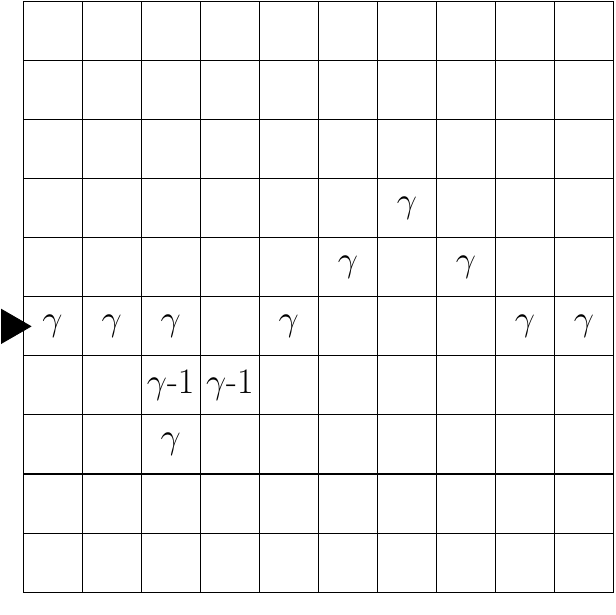}
  \hspace*{.5cm}
  \includegraphics[scale=.35]{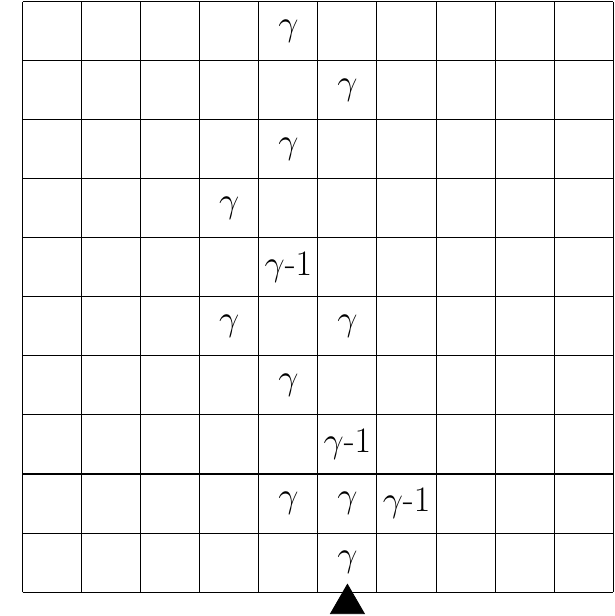}
  \caption{
    Timed crossover gates (top left),
    \emph{and} gates (top right, with $\delta=\theta_\neigh-2$),
    \emph{or} gates (top right, with $\delta=\theta_\neigh-1$),
    horizontal diode (bottom left) and
    vertical diode (bottom right)
    for neighborhoods $111$ and $127$
    with $\gamma=\theta_\neigh-1$.
    The gates have size $10\times 10$ and delay $12$.
  }
  \label{fig:timed_gates_111_127}
\end{figure}

\begin{figure}[t!]
  \centering
  \includegraphics[scale=.35]{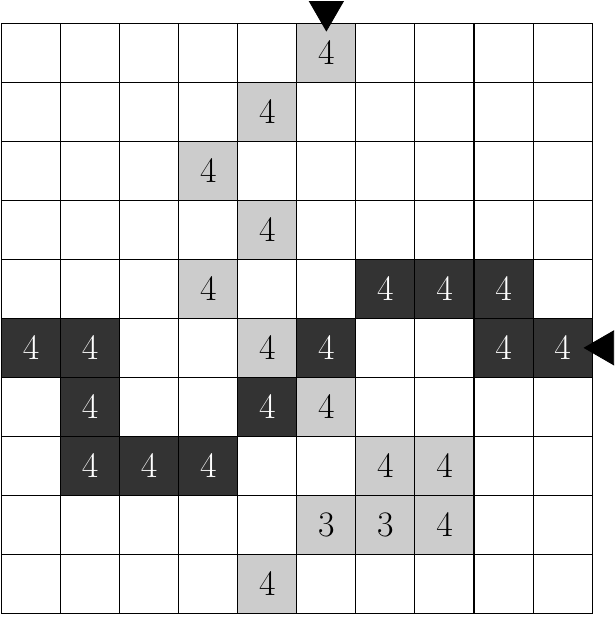}
  \hspace*{.5cm}
  \includegraphics[scale=.35]{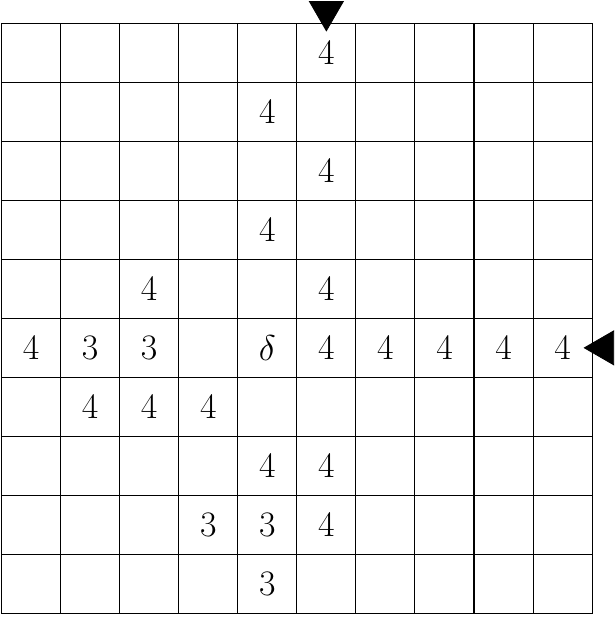}
  \\[.5em]
  \includegraphics[scale=.35]{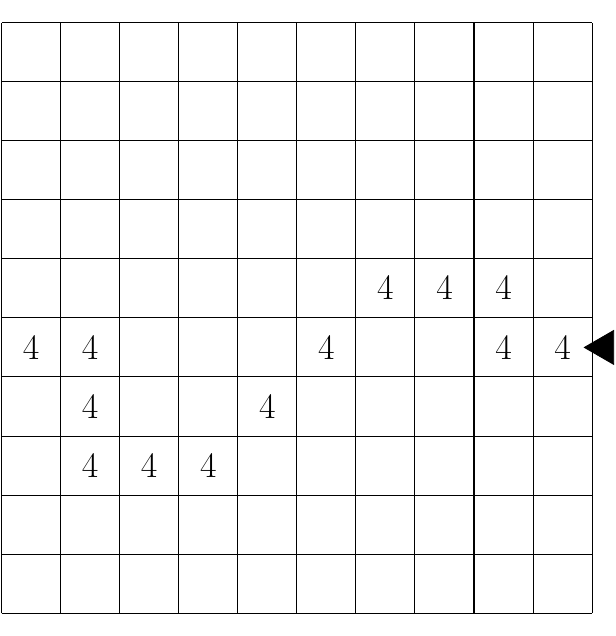}
  \hspace*{.5cm}
  \includegraphics[scale=.35]{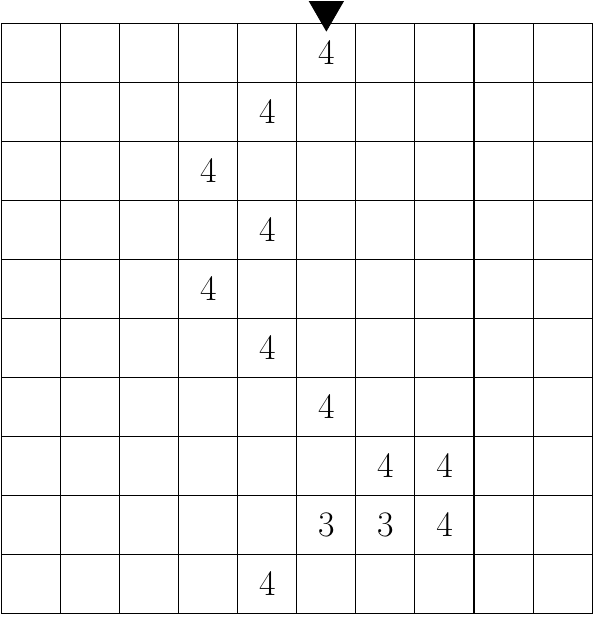}
  \caption{
    Timed crossover gate (top left),
    \emph{and} gate (top right, with $\delta=3$),
    \emph{or} gate (top right, with $\delta=4$),
    horizontal diode (bottom left) and
    horizontal diode (bottom right)
    for neighborhood $151$.
    The gates have size $10 \times 10$ and delay $13$.
  }
  \label{fig:timed_gates_151}
\end{figure}

\begin{figure}[t!]
  \centering
  \includegraphics[scale=.38]{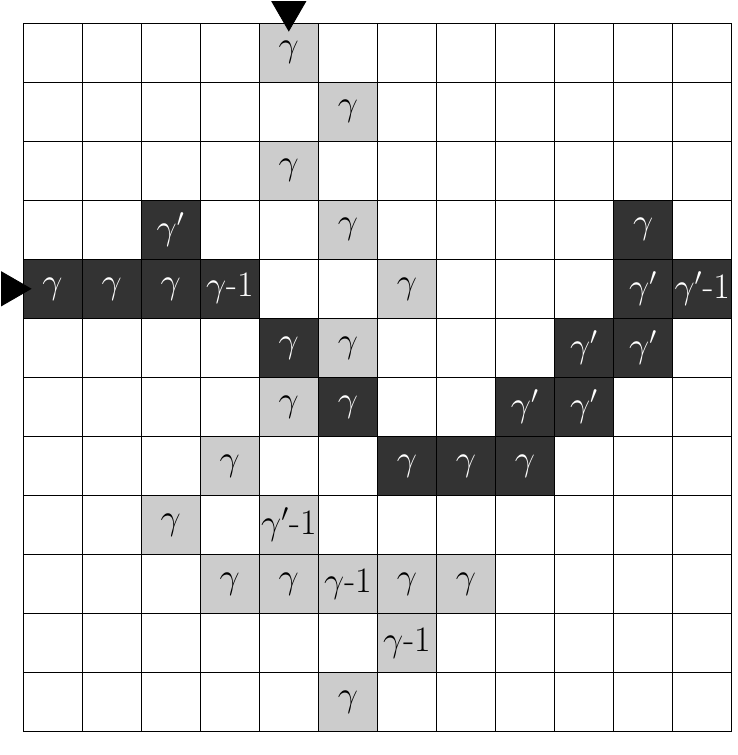}
  \hspace*{.5cm}
  \includegraphics[scale=.38]{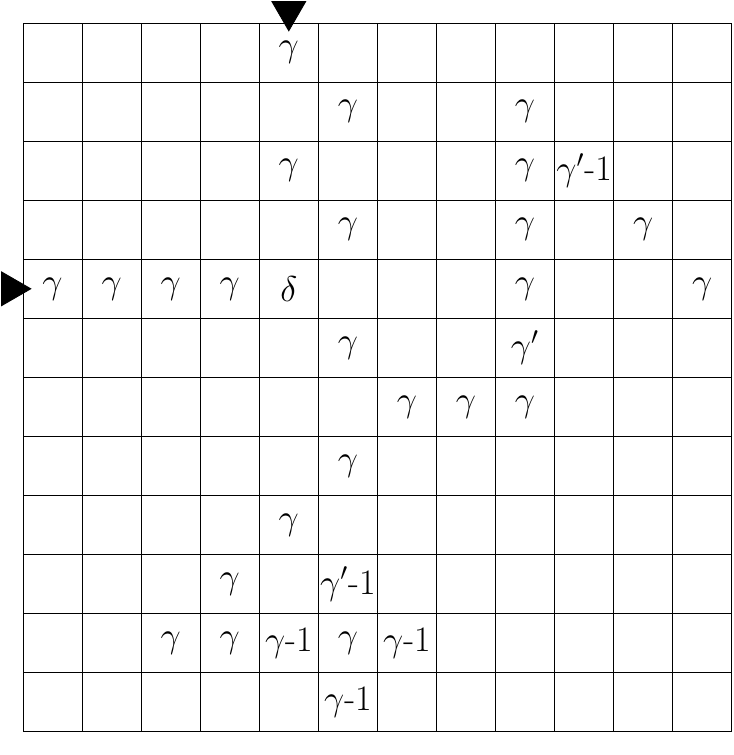}
  \\[.5em]
  \includegraphics[scale=.38]{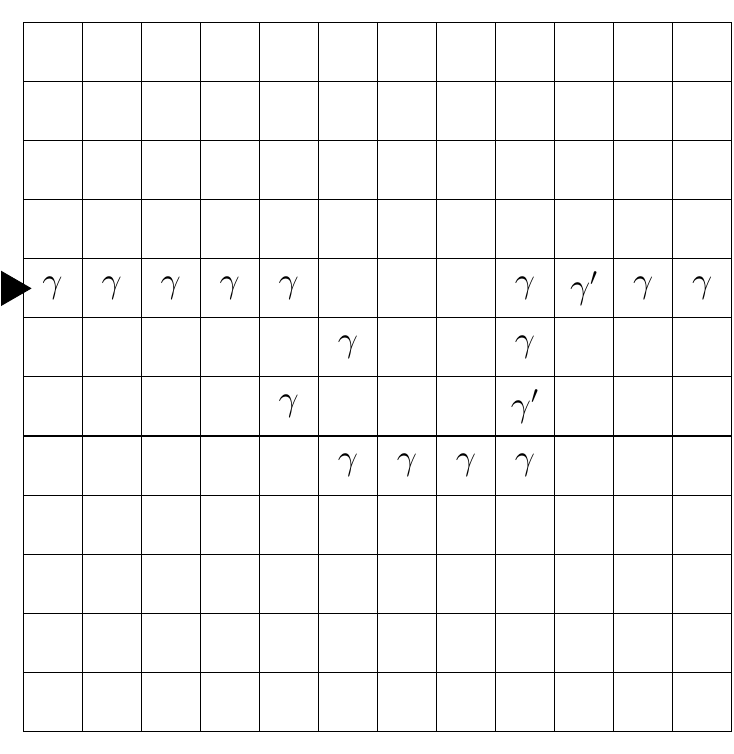}
  \hspace*{.5cm}
  \includegraphics[scale=.38]{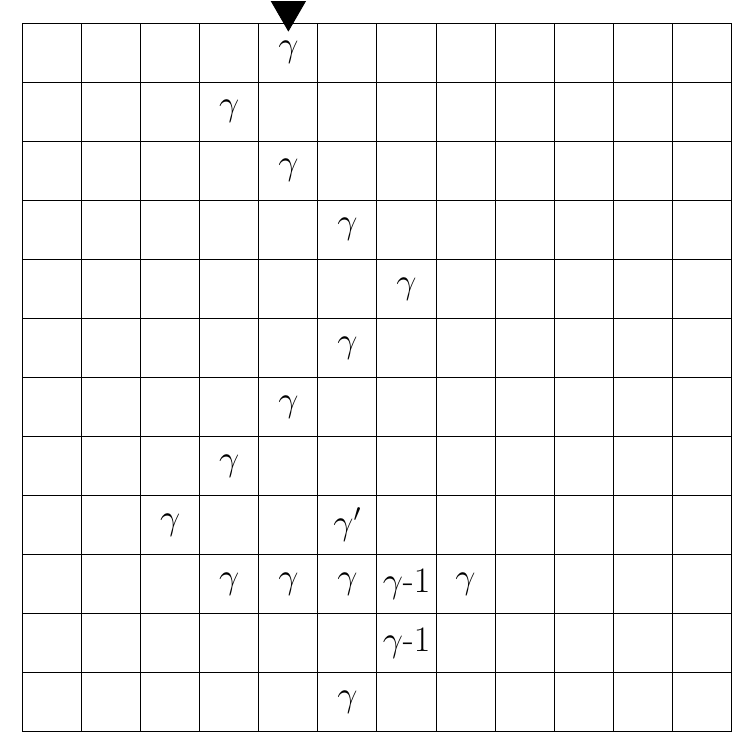}
  \caption{
    Timed crossover gates (top left),
    \emph{and} gates (top right, with $\delta=\theta_\neigh-2$),
    \emph{or} gates (top right, with $\delta=\theta_\neigh-1$),
    horizontal diode (bottom left) and
    vertical diode (bottom right)
    for neighborhoods
    $195$ (with $\gamma'=\theta_\neigh-1$),
    $199$ (with $\gamma'=\theta_\neigh-2$),
    $211$ (with $\gamma'=\theta_\neigh-1$) and
    $215$ (with $\gamma'=\theta_\neigh-2$),
    with $\gamma=\theta_\neigh-1$.
    The gates have size $12\times 12$ and delay $17$.
  }
  \label{fig:timed_gates_195_211_215_199}
\end{figure}

  The set of gates are presented as follows:
  \begin{itemize}
    \item on Figure~\ref{fig:timed_gates_111_127} for neighborhoods $111$ and $127$,
    \item on Figure~\ref{fig:timed_gates_151} for neighborhood $151$,
    \item on Figure~\ref{fig:timed_gates_195_211_215_199} for neighborhoods $195$, $211$, $215$ and $199$.
  \end{itemize}
  For each neighborhood we present a crossover gate, an \emph{and} gates, an \emph{or} gate, and diodes.
  The constant $1$ gate, turns and signal-duplication are derived from the \emph{or} gate.
  The constant $0$ gate is empty.
  \qed
\end{proof}


\subsection{Timed crossover possibility: delay issue}
\label{ss:crossover-T-delay}

Neighborhoods studied in this subsection:

\begin{center}
  \mooregridcenternumber{00100111} \mooregridcenternumber{01000011}
  \mooregridcenternumber{01010011} \mooregridcenternumber{01010111}
  \mooregridcenternumber{01011111} \mooregridcenternumber{10000011}
\end{center}

For multiple neighborhoods, we are able to construct a timed crossover gate,
and even \emph{and} and \emph{or} gates,
but we are faced with an issue when trying to plug these gates together.
Timed crossover gates for neighborhoods $39$, $67$, $83$, $87$ and $95$
are presented on Figure~\ref{fig:timed_crossovers},
and a timed crossover gate for neighborhood $131$ is presented on Figure~\ref{fig:timedcrossovergate}.

\begin{figure}[t!]
  \centering
  \begin{subfigure}[t]{.55\textwidth}
    \centering
    \includegraphics[scale = .5]{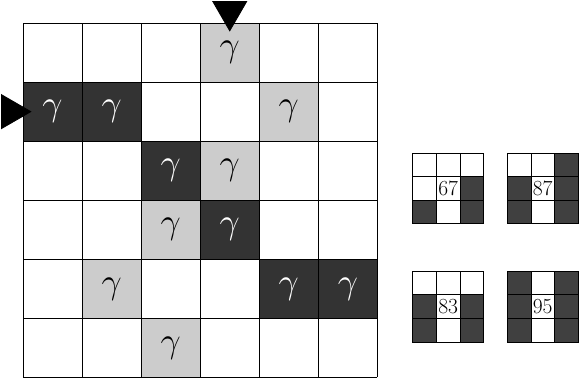}
    \caption{Neighborhoods $67$, $83$, $87$, and $95$}
  \end{subfigure}
  \begin{subfigure}[t]{.38\textwidth}
    \centering
    \includegraphics[scale = .5]{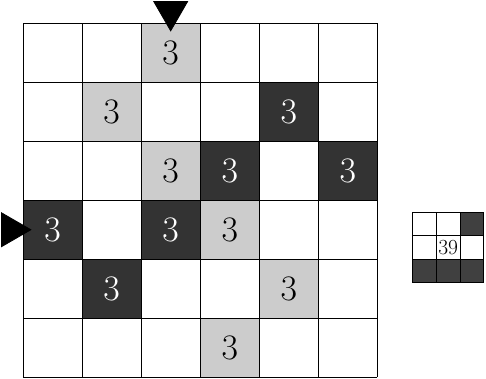}
    \caption{Neighborhoods $39$}
  \end{subfigure}
  \caption{Neighborhoods next to their timed crossover gate. Each non-empty cell contains $\gamma=\theta_\neigh-1$ grains. The triangles indicate the direction in which the cells are toppled.}
  \label{fig:timed_crossovers}
\end{figure}

Let us focus on neighborhood $39$ as an example.
On Figure~\ref{fig:timed_39} we present an almost complete set of gates, of size $11\times 11$ with delay $11$.
Inputs are located on the \emph{west} and \emph{north} borders,
and outputs on the \emph{east} and \emph{south} borders.
The north input and south output are identicaly located in every gate,
and there are two possibles locations of the west input and east output (size $11$ is odd):
either exactly in the middle of their respective borders (label $A$ on Figure~\ref{fig:timed_39}),
or one cell above (label $B$).
They are meant to be plugged alternatively to each other horizontaly
(adapting the construction from Figure~\ref{fig:cvp-grid}),
following a pattern of mono-labeled columns of gates.
There are:
\begin{itemize}
  \item \emph{wires} (also \emph{diodes}) labeled $A$ and $B$,
  \item a \emph{crossover} gate labeled $A$,
  \item an \emph{or} gate labeled $B$,
  \item an \emph{and} gate labeled $B$,
  \item a \emph{constant} $1$ gate labeled $B$ is derived from the \emph{or} gate,
  \item a \emph{signal-duplication} gate labeled $B$ is derived from the \emph{or} gate,
  \item \emph{turn} gates labeled $B$ are derived from the \emph{or} gate,
  \item a \emph{constant} $0$ gate is empty (labeled $A$ and $B$).
\end{itemize}
Only a pair of gates is missing: \emph{turn} gates labeled $A$.
If turn gates labeled $A$ exist,
then we can connect all these gates together following this mono-labeled column approach
(using zigzags to switch label)
and prove the $\Poly$-completeness of \textbf{$\neigh$-TIMED-PRED}.
However, we show below that turn gates labeled $A$ do not exist:
the distance from the west input corresponding to label $A$ to the south output
is incompatible with the movements possible under a delay of $11$.
We do not know how to proceed with this reduction without turn gates labeled $A$.

\begin{figure}
  \centering
  \includegraphics[scale=.4]{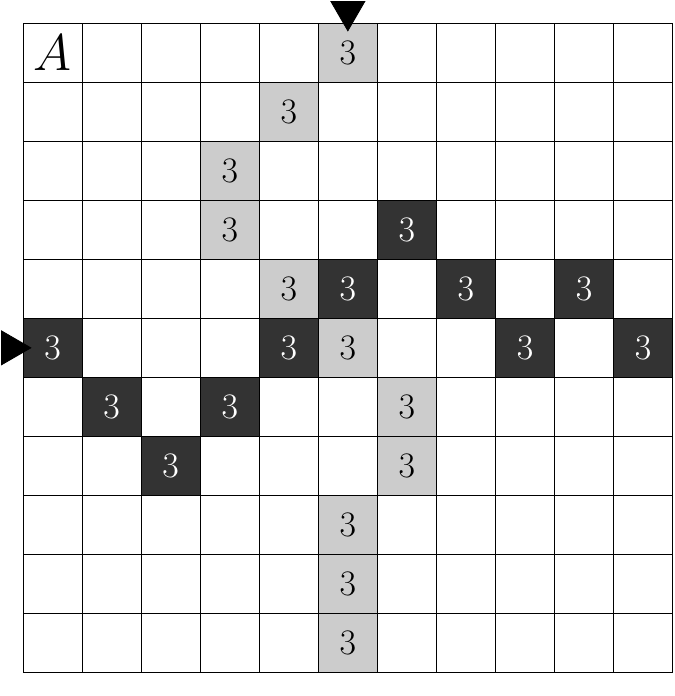}
  \hspace*{.5cm}
  \includegraphics[scale=.4]{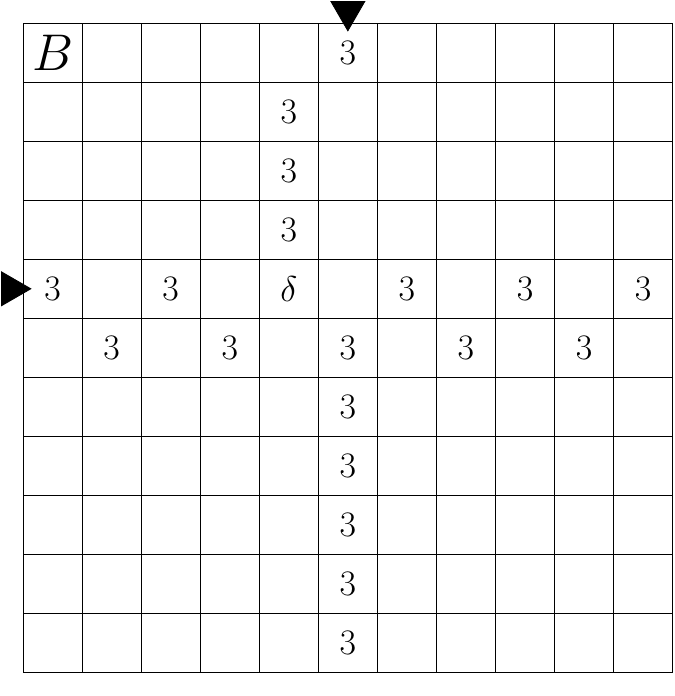}
  \\[.5em]
  \includegraphics[scale=.3]{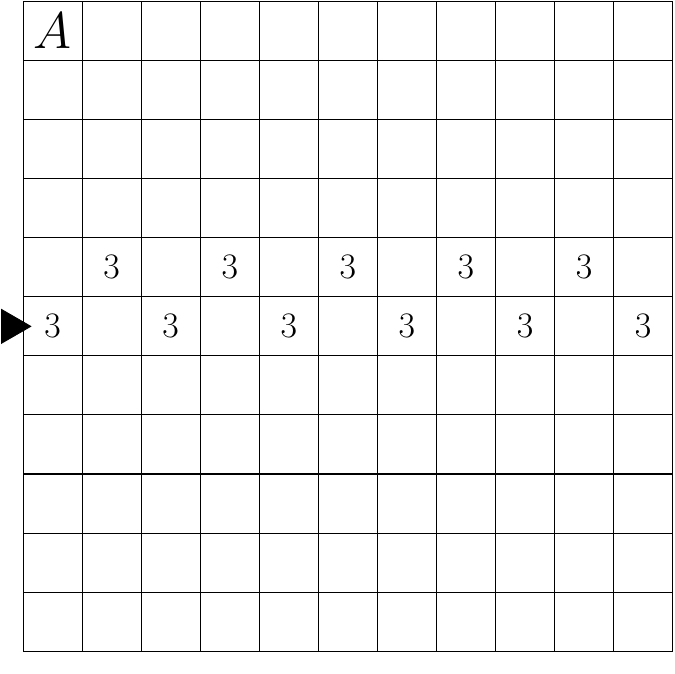}
  \hspace*{.5cm}
  \includegraphics[scale=.3]{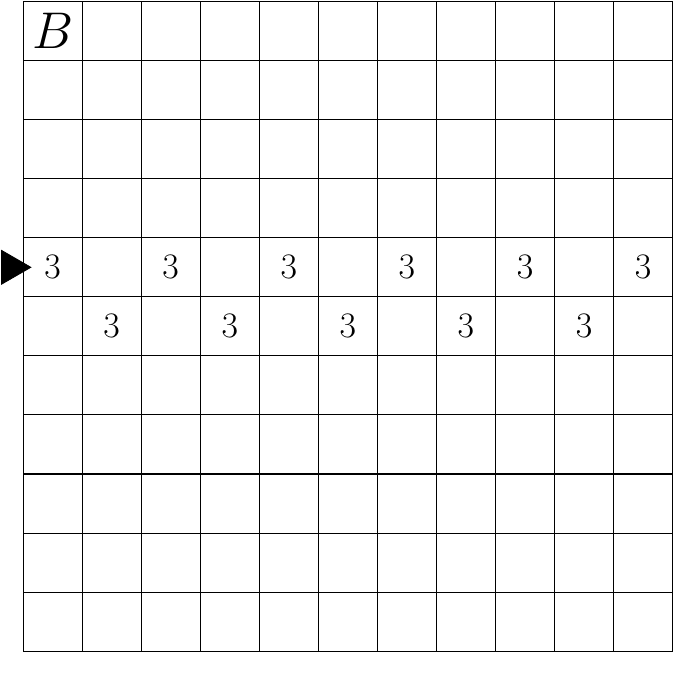}
  \hspace*{.5cm}
  \includegraphics[scale=.3]{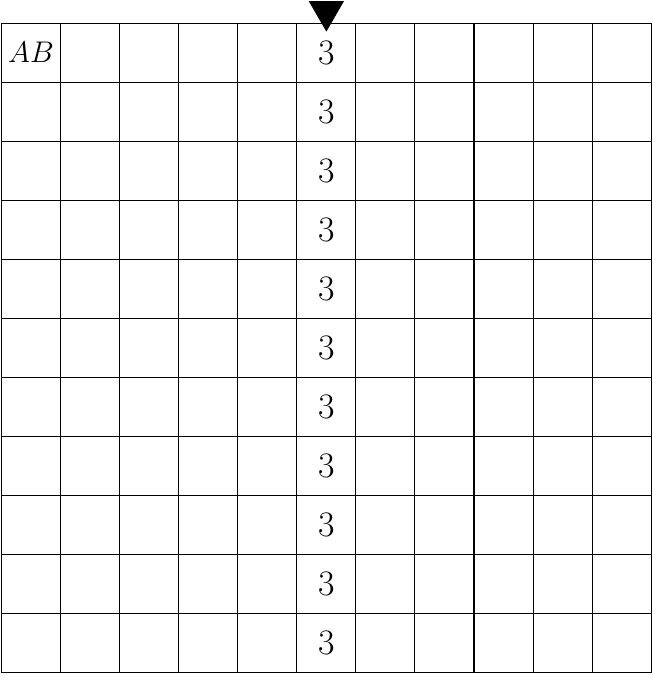}
  \caption{
    An almost complete set of timed gates of size $11\times 11$ and delay $11$
    for neighborhood $39$.
    Timed crossover gate (top left, labeled $A$),
    \emph{or} gate (top right, with $\delta=3$, labeled $B$),
    \emph{and} gate (top right, with $\delta=2$, labeled $B$),
    horizontal diodes  (bottom left and center, resp. labeled $A$ and $B$)
    and vertical diode (bottom right, labeled both $A$ and $B$).
  }
  \label{fig:timed_39}
\end{figure}

The synchronization issue is explained by the fact that neighborhoods
$39$, $67$, $83$, $87$, $95$ and $131$ are all in one of the two following cases:
\begin{itemize}
  \item not having cell \emph{north} nor \emph{south} in the neighborhood,
  \item not having cell \emph{west} nor \emph{east} in the neighborhood.
\end{itemize}
Consequently, at each step of the avalanche process (chain of reaction of topplings),
a signal must move along the perpendicular direction.
For simplicity let us assume that neither cell \emph{west} nor \emph{east} are in the neighborhood,
so that a signal must move along the $y$ axis at each step.
This induces a parity for the timestamps of signals:
they have even timestamps on the cells $X_2=\{(x,y)\in\Z^2 \mid x \text{ is even}\}$
and odd timestamps on the other cells $\Z^2\setminus X_2$,
or odd timestamps on $X_2$ and even timestamps on $\Z^2\setminus X_2$.

It turns out that the \emph{or/and} and \emph{crossover} gates presented on
Figure~\ref{fig:timed_39} use a different combination of parities for the two input signals, because:
\begin{itemize}
  \item \emph{or/and} gates require the two signals to \emph{meet} at synchronized timestamps,
  \item \emph{crossover} gates require the two signals to \emph{be adjacent along the $y$ axis} at synchronized timestamps.
\end{itemize}
This leads to an incompatiblity in the synchronization between those gates for neighborhood $39$.
Neighborhoods $67$, $83$, $87$ and $95$ face a symmetric issue.

The case of $131$ is different. It does not have the cell \emph{west} nor \emph{east}
in the neighborhood, therefore it has a parity along the $y$ axis,
but the \emph{crossover} gate requires the two signals to be adjacent along the $x$ axis.
So there is no apparent incompatibility.
Nonetheless, we were not able to solve the synchronization issue for neighborhood $131$.

\subsection{Timed crossover impossibility: conjectured}
\label{ss:crossover-T-impossibility}

Neighborhoods studied in this subsection:

\begin{center}
  \mooregridcenternumber{00100011} \hspace*{-.67cm}
  \mooregridcenternumber{01100011} \hspace*{-.67cm}
  \mooregridcenternumber{01100111} \hspace*{-.67cm}
  \mooregridcenternumber{01110011} \hspace*{-.67cm}
  \mooregridcenternumber{01110111} \hspace*{-.67cm}
  \mooregridcenternumber{10100011} \hspace*{-.67cm}
  \mooregridcenternumber{11100011} \hspace*{-.67cm}
  \mooregridcenternumber{11110011} \hspace*{-.67cm}
  \mooregridcenternumber{11110111} \hspace*{-.67cm}
  \mooregridcenternumber{11111111}
\end{center}

The difficulty to design a timed crossover gate for these neighborhoods
lies in the fact that it is not possible to take advantage of the diagonal
neighbors (in terms of delay) because the cell in-between the two diagonals
is also an out-neighbor. We conjecture that this prevents to construct timed
crossover gates for these neighborhoods, \emph{i.e.}~that they do not admit a timed crossover gate.

\section{Conclusions}

In this work we have considered the timed version of the sandpile prediction problem,
asking whether, following a single grain addition on a stable configuration,
a questionned cell will topple or not at a precise time step $t\in\N$ which is part
of the input.
To study this problem, we have introduced the notion of timed firing graph,
and timed crossover gate. The timed crossover gate allows for each of the two input cells
to trigger an avalanche eventually toppling both output cells,
provided there is a strictly positive retard on the output signal in the incorrect direction
(not correspond to a crossing of two signal, \emph{e.g.}~from
\emph{north} to \emph{south} the avalanche is allowed to topple the \emph{east} cell
if the delay of the signal from \emph{north} to \emph{south}
is striclty smaller than the delay of the signal from \emph{north} to \emph{east}).
This difference of delay aims at being embedded in a reduction from \textbf{MCVP}
to \textbf{$\neigh$-TIMED-PRED} problem, according to the construction depicted on Figure~\ref{fig:cvp-grid}
(which has been designed with this purpose).
It requires to have \emph{wire}, \emph{turns}, \emph{and} and \emph{or} gates with adequate delays,
and with input and output cells that plug well to each other in order to keep the signals synchronized.
Our results are partial; among the $255$ non-empty subsets of the Moore neighborhood
(the results are summarized on Table~\ref{tab:res_cross-T}):
\begin{itemize}
  \item $21$ do not span $\Z^2$ and are therefore not relevant to our study,
  \item $52$ 
    admit a complete timed gates toolkit (including a timed crossover gate) and therefore
    have a $\Poly$-complete timed prediction problem \textbf{$\neigh$-TIMED-PRED},
  \item $99$ 
    have a planar sandpile structure and therefore do not admit a timed crossover gate,
  \item $34$ 
    admit a timed crossover gate, but one is faced with an issue in the delays when connecting the gates together,

  \item the $49$ 
    remaining neighborhoods (including Moore) are conjectured to
    do not admit a timed crossover gate.
\end{itemize}
Remark that a crossover gate is also a timed crossover gate,
hence the $12$ neighborhoods having a $\Poly$-complete \textbf{$\neigh$-PRED} problem
are part of the $52$ neighborhoods having a $\Poly$-complete \textbf{$\neigh$-TIMED-PRED} problem.
It is notable that $40$ neighborhoods (from $7$ equivalence classes)
do not admit a crossover gate, but do admit a complete timed gates toolkit
allowing to prove the $\Poly$-completeness of timed prediction problem.
Intuitively, following Banks approach, for these neighborhoods
the embedding of the circuit evaluation in the sandpile dynamics
is possible when one follows precisely the time steps,
but asymptotically the sandpile will not stay in a configuration where the result of the circuit can be read out.

The precise characterization of the neighborhoods for which it is impossible to follow Banks approach
in proving the $\Poly$-completeness of \textbf{$\neigh$-TIMED-PRED} is still open,
and does not seem to be captured by the existence of timed crossover gate as defined in Section~\ref{s:def_soa}.
Proving that timed crossover gate do not exist for von Neumann and Moore neighborhoods
would reveal more obstacles to the embedding of efficient computation within them.

\begin{credits}
\subsubsection{\ackname}
The authors thank projects
ANR-24-CE48-7504 ALARICE,
HO\-RIZON-MSCA-2022-SE-01 101131549 ACANCOS,
STIC AmSud CAMA 22-STIC-02.

\subsubsection{\discintname}
The authors have no competing interests to declare that are
relevant to the content of this article.
 \end{credits}
%
%
%
\bibliographystyle{splncs04}
\bibliography{biblio}
\end{document}